\newcommand{\Z}{\mathbb{Z}}
\newcommand{\ve}{\varepsilon}
\newcommand{\randgets}{\overset{\$}{\gets}}
\spnewtheorem{thm}{Theorem}[subsection]{\bfseries}{\itshape}
\spnewtheorem*{defn}{Definition}{\bfseries}{\rmfamily}
\newif\ifblind\blindfalse
\title{\texorpdfstring{Close to Uniform Prime Number Generation With
Fewer Random Bits\thanks{An extended abstract of this paper will appear
in the proceedings of ICALP~2014. This is the full version. Moreover, an
earlier preprint version of this paper focused on implementation aspects is
also available as IACR ePrint report 2011/481. We intend to merge that
material into this version at a later stage.}}{Close to Uniform Prime Number Generation With 
Fewer Random Bits}}
  \author{(Submission to ICALP 2014)}
  \institute{~}
  \author{Pierre-Alain~Fouque\inst{1} \and Mehdi~Tibouchi\inst{2}}
  \institute{%
     Universit{\'e} de Rennes 1 and Institut universitaire de France \\
     \email{pierre-alain.fouque@ens.fr}
  \and
     NTT Secure Platform Laboratories\\
     \email{tibouchi.mehdi@lab.ntt.co.jp}
  }
\begin{document}
\maketitle

\begin{abstract}
In this paper, we analyze several variants of a simple method for
generating prime numbers with fewer random bits. To generate a prime $p$
less than $x$, the basic idea is to fix a constant $q\propto x^{1-\ve}$,
pick a uniformly random $a<q$ coprime to $q$, and choose $p$ of the form
$a+t\cdot q$, where only $t$ is updated if the primality test fails. We
prove that variants of this approach provide prime generation algorithms
requiring few random bits and whose output distribution is close to
uniform, under less and less expensive assumptions: first a relatively
strong conjecture by H.~Montgomery, made precise by Friedlander and
Granville; then the Extended Riemann Hypothesis; and finally fully
unconditionally using the Barban--Davenport--Halberstam theorem.

\medskip
We argue that this approach has a number of desirable properties compared
to previous algorithms. In particular:
\begin{itemize}
\item
it uses much fewer random bits than both the ``trivial algorithm''
(testing random numbers less than $x$ for primality) and Maurer's almost
uniform prime generation algorithm;

\item
the distance of its output distribution to uniform can be made
arbitrarily small, unlike algorithms like PRIMEINC (studied by Brandt and
Damg\aa{}rd), which we show exhibit significant biases;

\item
all quality measures (number of primality tests, output entropy,
randomness, etc.) can be obtained under very standard conjectures or even
unconditionally, whereas most previous nontrivial algorithms can only be
proved based on stronger, less standard assumptions like the
Hardy--Littlewood prime tuple conjecture.
\end{itemize}
\par\bigskip\noindent
\textbf{Keywords:} {Number Theory,
Cryptography,
Prime Number Generation.}
\end{abstract}

\allowdisplaybreaks

\section{Introduction}
\label{s:intro}

There are several ways in which we could assess the quality of a random
prime generation algorithm, such as its speed (time complexity), its
accuracy (the probability that it outputs numbers that are in fact
composite), its statistical properties (the regularity of the output
distribution), and the number of bits of randomness it consumes to
produce a prime number (as good randomness is crucial to key generation
and not easy to come by \cite{rfc4086}).

In a number of works in the literature, cryptographers have proposed
faster prime
generation algorithms \cite{DBLP:conf/asiacrypt/BrandtDL91,DBLP:conf/crypto/BrandtD92,DBLP:conf/ches/JoyePV00,DBLP:conf/ches/JoyeP06}
or algorithms providing a proof that the generated numbers are indeed prime
numbers \cite{DBLP:conf/eurocrypt/Maurer89,DBLP:journals/joc/Maurer95,DBLP:conf/crypto/Mihailescu94}. 

A number of these works also prove lower bounds on the entropy of the
distribution of prime numbers they generate, usually based on very strong
conjectures on the regularity of prime numbers, such as the prime
$r$-tuple conjecture of Hardy--Littlewood~\cite{HL}. However, such bounds
on the entropy do not ensure that the resulting distribution is
statistically close to the uniform distribution: for example, they do not
preclude the existence of efficient distinguishers from the uniform
distribution, which can indeed be shown to exist in most cases.

But some cryptographic protocols (including most schemes based on the
Strong RSA assumption, such as Cramer--Shoup
signatures~\cite{DBLP:journals/tissec/CramerS00}) specifically require
uniformly distributed prime numbers for the security proofs to go
through.

Moreover, some cryptographers, like
Maurer~\cite{DBLP:conf/eurocrypt/Maurer89}, have argued that even for
more common uses of prime number generation, like RSA key generation, one
should preferably generate primes that are almost uniform, so as to avoid
biases in the RSA moduli $N$ themselves, even if it is not immediately
clear how such biases can help an adversary trying to factor $N$. This
view is counterbalanced by results of Mih\u{a}ilescu~\cite{MR1944733}
stating in particular that, provided the biases are not too large (a
condition that is satisfied by the algorithms with large output entropy
mentioned above, if the conjectures used to establish those entropy
bounds hold), then, asymptotically, they can give at most a
polynomial advantage to an adversary trying to factor $N$. This makes the
problem of uniformity in prime number generation somewhat comparable to
the problem of tightness in security reductions.

To the authors' knowledge, the only known prime generation algorithms for
which the statistical distance to the uniform distribution can be bounded
are the one proposed by
Maurer~\cite{DBLP:conf/eurocrypt/Maurer89,DBLP:journals/joc/Maurer95} on
the one hand, and the trivial algorithm (viz. pick a random odd integer
in the desired interval, return it if it is prime, and try again
otherwise) on the other hand. The output distribution of the trivial
algorithm is exactly uniform (or at least statistically close, once one
accounts for the compositeness probability of the underlying randomized
primality checking algorithm), and the same can be said for at least some
variants of Maurer's algorithm, but both of those algorithms have the
drawback of consuming a very large amount of random bits.

By contrast, the PRIMEINC algorithm studied by Brandt and
Damg{\aa}rd~\cite{DBLP:conf/crypto/BrandtD92} (basically, pick a random
number and increase it until a prime is found) only consumes roughly as
many random bits as the size of the output primes, but we can show that
its output distribution, even if it can be shown to have high entropy if
the prime $r$-tuple conjecture holds, is also provably quite far from
uniform, as we demonstrate in \S\ref{sec:why}. It is likely that
most algorithms that proceed deterministically beyond an initial random
choice, including those of Joye, Paillier and
Vaudenay~\cite{DBLP:conf/ches/JoyePV00,DBLP:conf/ches/JoyeP06},
exhibit similar distributional biases.

The goal of this paper is to achieve in some sense the best of both
worlds: construct a prime generation algorithm that consumes much fewer
random bits than the trivial algorithm while being efficient and having
an output distribution that is provably close to the uniform one.

We present such an algorithm in \S\ref{sec:method}: to generate a prime
$p$, the basic idea is to fix a constant $q\sim x^{1-\ve}$, pick a
uniformly random $a<q$ coprime to $q$, and choose $p$ of the form
$a+t\cdot q$, where only $t$ is updated if the primality test fails. We
prove that variants of this approach provide prime generation algorithms
requiring few random bits and whose output distribution is close to
uniform, under less and less expensive assumptions: first a relatively
strong conjecture by H.\,L.~Montgomery, made precise by Friedlander and
Granville; then the Extended Riemann Hypothesis; and finally fully
unconditionally using the Barban--Davenport--Halberstam theorem.

\section{Preliminaries}
\label{sec:prelims}

\subsection{Regularity measures of finite probability distributions}

In this subsection, we give some definitions on distances between random
variables and the uniform distribution on a finite set. We also provide some
relations which will be useful to bound the entropy of our prime
generation algorithms. These results can be found in~\cite{Shoup}.

%

\begin{defn}[Entropy and Statistical Distance]
Let $X$ and $Y$ be two random variables on a finite set $S$. The
\emph{statistical distance} between them is defined as the $\ell_1$
norm:%
\footnote{An alternate definition frequently found in the literature
differs from this one by a constant factor $1/2$. That constant factor is
irrelevant for our purposes.}
\[ \Delta_1(X;Y)=\sum_{s\in S} \Big| \Pr[X=s] - \Pr[Y=s]\Big|. \]
We simply denote by $\Delta_1(X)$ the statistical distance between $X$
and the uniform distribution on $S$:
\[ \Delta_1(X)=\sum_{s\in S} \Big| \Pr[X=s] - \frac1{|S|}\Big|, \]
and say that $X$ is \emph{statistically close to uniform} when $\Delta_1(X)$ is
negligible.%
\footnote{For this to be well-defined, we of course need a family of
random variables on increasingly large sets $S$. Usual abuses of language
apply.}

The \emph{squared Euclidean imbalance} of $X$ is the square of the
$\ell_2$ norm between $X$ and the uniform distribution on the same set:
\[\Delta^2_2(X)=\sum_{s \in S} \Big| \Pr[X=s]-1/|S|\Big|^2.\]
We also define the \emph{collision probability} of $X$ as: 
\[\beta(X)=\sum_{s\in S} \Pr[X=s]^2,\]
and the \emph{collision entropy} (also known as the R\'enyi entropy)
of $X$ is then $H_2(X) = -\log_2 \beta(X)$. Finally, the
\emph{min-entropy} of $X$ is $H_\infty(X) = -\log_2\gamma(X)$, where
$\gamma(X)=\max_{s\in S}(\Pr[X=s])$.
\end{defn}

\begin{lemma}
Suppose $X$ is a random variable of a finite set $S$.
The quantities defined above satisfy the following relations:
\begin{align}
\gamma(X)^2 &\leq \beta(X) = 1/|S| + \Delta^2_2(X) \leq \gamma(X) \leq
1/|S|+\Delta_1(X), \label{eq:beta}\\
\Delta_1(X) &\leq \Delta_2(X)\sqrt{|S|}. \label{eq:d1d2}
\end{align}
\end{lemma}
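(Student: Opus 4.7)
The plan is to prove the five inequalities appearing in \eqref{eq:beta} and \eqref{eq:d1d2} one by one, in order of increasing cost. All of them follow from elementary manipulations (Cauchy--Schwarz, expansion of squares, and nonnegativity), so there is no real obstacle; the main care to take is bookkeeping the definitions.

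First I would prove the central equality $\beta(X) = 1/|S| + \Delta_2^2(X)$ by expanding
\[
\Delta_2^2(X) = \sum_{s\in S}\bigl(\Pr[X=s]-1/|S|\bigr)^2 = \sum_{s\in S}\Pr[X=s]^2 - \frac{2}{|S|}\sum_{s\in S}\Pr[X=s] + |S|\cdot\frac{1}{|S|^2},
\]
which collapses to $\beta(X)-1/|S|$ using $\sum_s \Pr[X=s]=1$. This also makes the leftmost inequality $\gamma(X)^2\le\beta(X)$ immediate, since $\beta(X)=\sum_s \Pr[X=s]^2 \ge \max_s\Pr[X=s]^2 = \gamma(X)^2$.

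Next I would handle the upper bound $\beta(X)\le\gamma(X)$ by the trivial majorization
\[
\beta(X) = \sum_{s\in S}\Pr[X=s]\cdot\Pr[X=s] \le \gamma(X)\sum_{s\in S}\Pr[X=s] = \gamma(X),
\]
and the bound $\gamma(X)\le 1/|S|+\Delta_1(X)$ by picking $s^\star$ attaining the maximum and writing
\[
\gamma(X) - \frac{1}{|S|} = \Pr[X=s^\star] - \frac{1}{|S|} \le \Bigl|\Pr[X=s^\star]-\frac{1}{|S|}\Bigr| \le \sum_{s\in S}\Bigl|\Pr[X=s]-\frac{1}{|S|}\Bigr| = \Delta_1(X).
\]

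Finally, for \eqref{eq:d1d2} I would apply the Cauchy--Schwarz inequality to the vectors $\bigl(|\Pr[X=s]-1/|S|\,|\bigr)_{s\in S}$ and $(1,\ldots,1)\in\mathbb{R}^{|S|}$:
\[
\Delta_1(X) = \sum_{s\in S}\Bigl|\Pr[X=s]-\frac{1}{|S|}\Bigr|\cdot 1 \le \sqrt{\sum_{s\in S}\Bigl(\Pr[X=s]-\frac{1}{|S|}\Bigr)^{\!2}}\cdot\sqrt{|S|} = \Delta_2(X)\sqrt{|S|},
\]
which completes the proof. The hardest part, if any, is just keeping track of which definition is which; no number-theoretic input is needed.
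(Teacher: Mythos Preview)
Your proof is correct. The paper itself does not prove this lemma; it merely states it and refers the reader to Shoup's textbook~\cite{Shoup}, so there is no ``paper's own proof'' to compare against. Your argument is the standard elementary one (expand the square for the equality, bound each probability by $\gamma(X)$ for $\beta(X)\le\gamma(X)$, single-term bound for $\gamma(X)\le 1/|S|+\Delta_1(X)$, and Cauchy--Schwarz for \eqref{eq:d1d2}), and all five steps are valid as written.
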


\subsection{Prime numbers in arithmetic progressions}


All algorithms proposed in this paper are based on the key idea that, for
any given integer $q>1$, prime numbers are essentially equidistributed
among invertible classes modulo $q$. The first formalization of that idea
is de la Vall{\'e}e Poussin's \emph{prime number theorem for arithmetic
progressions}~\cite{ValleePoussin96}, which states that for any fixed
$q>1$ and any $a$ coprime to $q$, the number $\pi(x;q,a)$ of prime
numbers $p\leq x$ such that $p\equiv a\pmod q$ satisfies:
\begin{equation}
\label{eq:valleepoussin}
\pi(x;q,a) \mathop{\sim}_{x\to+\infty} \frac{\pi(x)}{\varphi(q)}. 
\end{equation}

De la Vall{\'e}e Poussin established that estimate for constant $q$, but it
is believed to hold uniformly in a very large range for $q$. In fact,
H.\,L.~Montgomery conjectured~\cite{Montgomery71,Montgomery76} that for any
$\ve > 0$:%
\footnote{As is usual in analytic number theory and related subjects,
we use the notations $f(u) \ll g(u)$ and $f(u) = O\big(g(u)\big)$
interchangeably. A subscripted variable on $\ll$ or $O$ means that the implied
constant depends only on that variable.}
\[ \left| \pi(x;q,a) - \frac{\pi(x)}{\varphi(q)} \right| \ll_\ve
   (x/q)^{1/2+\ve} \qquad \big( q<x,\ (a,q)=1 \big),
\]
which would imply that \eqref{eq:valleepoussin} holds uniformly for $q
\ll x/\log^{2+\ve} x$. However, Friedlander and Granville showed~\cite{FG89}
that conjecture to be overly optimistic, and proposed the following
corrected estimate.
\begin{conjecture}[Friedlander--Granville--Montgomery]
\label{conj:fgm}
For $q<x$, $(a,q) = 1$ and all $\ve > 0$, we have:
\[ \left| \pi(x;q,a) - \frac{\pi(x)}{\varphi(q)} \right| \ll_\ve
   (x/q)^{1/2}\cdot x^\ve.
\]
In particular, the estimate \eqref{eq:valleepoussin} holds uniformly for
$q \ll x^{1-3\ve}$.
\end{conjecture}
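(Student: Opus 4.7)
The first displayed bound is not something I would expect to prove — it is genuinely open, and in fact strictly stronger than the Generalized Riemann Hypothesis, which only yields $|\pi(x;q,a)-\pi(x)/\varphi(q)| \ll x^{1/2}(\log x)^2$ uniformly in $q$, with no decay in $q$ at all. So my plan is to split the statement into two parts: (a) heuristic commentary on why the $(x/q)^{1/2}\cdot x^\ve$ shape is the ``right'' prediction and why Friedlander and Granville had to correct Montgomery's original $(x/q)^{1/2+\ve}$ to this form; and (b) the routine deduction of the ``in particular'' clause from the main bound.

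For (a), the sanity check is a probabilistic heuristic: if the $\pi(x)$ primes up to $x$ were scattered independently and uniformly among the $\varphi(q)$ invertible residue classes modulo $q$, then each class count would have standard deviation $\sqrt{\pi(x)/\varphi(q)} = \sqrt{(x/q)/\log x}$, so square-root cancellation at the scale $(x/q)^{1/2}$ is what one should expect. Friedlander and Granville's observation was that Montgomery's original conjecture dipped below this natural scale when $q$ is close to $x$, producing counterexamples; inserting the factor $x^\ve$ instead of $(x/q)^\ve$ fixes this. A genuine proof would require character-sum/zero-density estimates for Dirichlet $L$-functions well beyond anything known, and I would not attempt it — I would simply cite~\cite{FG89} for the statement.

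For (b), assuming the main bound, I need to show that its right-hand side is negligible compared to the main term $\pi(x)/\varphi(q)$ uniformly in the range $q \ll x^{1-3\ve}$. By the prime number theorem $\pi(x) \gg x/\log x$ and trivially $\varphi(q) \leq q$, so
\[
\frac{\pi(x)}{\varphi(q)} \;\gg\; \frac{x}{q\log x}.
\]
Dividing the error estimate by this gives a ratio bounded by $(q/x)^{1/2}\cdot x^\ve \log x$. With $q \ll x^{1-3\ve}$ this is $\ll x^{-\ve/2}\log x$, which tends to $0$. Hence $\pi(x;q,a)/\big(\pi(x)/\varphi(q)\big) \to 1$ uniformly, which is exactly the uniform version of~\eqref{eq:valleepoussin}.

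The only point of care in writing this up is bookkeeping of the symbol $\ve$: the $\ve$ appearing in the ``in particular'' range should be thought of as strictly larger than the one appearing in the error term, so that the extra $\log x$ factor (and any implied constant) gets absorbed. Concretely, given any target range $q \ll x^{1-\delta}$, apply the main bound with $\ve = \delta/3$; then the computation above closes with room to spare.
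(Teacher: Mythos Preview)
Your reading is correct: this is stated in the paper as a \emph{conjecture}, not a theorem, and the paper offers no proof of the main displayed bound at all---it simply attributes the corrected form to Friedlander and Granville~\cite{FG89} after noting that Montgomery's original $(x/q)^{1/2+\ve}$ was too optimistic. Your part~(a) (heuristic commentary plus a citation) is therefore exactly in line with how the paper handles it; there is nothing more to do.

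For the ``in particular'' clause, the paper does not spell out the deduction either, treating it as immediate. Your computation in~(b) is the right one and is correct. One minor remark: your closing caveat about $\ve$-bookkeeping is unnecessary here. Using the \emph{same} $\ve$ in the error bound and in the range $q\ll x^{1-3\ve}$ already gives a ratio $\ll x^{-\ve/2}\log x$, which tends to zero for any fixed $\ve>0$; no relabelling is needed to absorb the logarithm.
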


That conjecture is much more precise than what can be proved using
current techniques, however. The best unconditional result of the same
form is the Siegel--Walfisz theorem~\cite{Walfisz36}, which only implies
that \eqref{eq:valleepoussin} holds in the much smaller range $q \ll
(\log x)^A$ (for any $A>0$).

Stronger estimates can be established assuming the Extended Riemann
Hypothesis ({i.e.} the Riemann Hypothesis for $L$-functions of Dirichlet
characters), which
gives~\cite[p.~125]{Davenport}:
\[ \left| \pi(x;q,a) - \frac{\pi(x)}{\varphi(q)} \right| \ll
   x^{1/2}\log x \qquad \big( q<x,\ (a,q)=1 \big).
\]
This implies \eqref{eq:valleepoussin} in the range $q\ll
x^{1/2}/\log^{2+\ve} x$, which is again much smaller than the one from
Conjecture~\ref{conj:fgm}. The range can be extended using averaging,
however. The previous result under ERH is actually deduced from estimates
on the character sums $\pi(x,\chi) = \sum_{p\leq x} \chi(p)$ for
nontrivial Dirichlet characters $\chi \bmod q$, and more careful
character sum arguments allowed Tur{\'a}n to obtain the following
theorem.
\begin{theorem}[Tur{\'a}n~\cite{Turan37}]
The Extended Riemann Hypothesis implies that for all $q<x$:
\[ \sum_{a\in(\Z/q\Z)^*} \left| \pi(x;q,a) - \frac{\pi(x)}{\varphi(q)}
\right|^2 \ll x(\log x)^2
\]
where the implied constant is absolute.
\end{theorem}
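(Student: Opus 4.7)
The plan is the standard orthogonality-of-characters approach, which converts the sum over arithmetic progressions into a sum over Dirichlet characters and then exploits Parseval's identity to gain a factor of $1/\varphi(q)$ on average. First, for $(a,q)=1$ I would use the character expansion
\[
\pi(x;q,a) = \frac{1}{\varphi(q)}\sum_{\chi \bmod q} \bar\chi(a)\,\pi(x,\chi), \qquad \pi(x,\chi) = \sum_{p\leq x}\chi(p),
\]
and isolate the principal character: since $\pi(x,\chi_0)$ counts primes $p\leq x$ coprime to $q$, it equals $\pi(x) + O(\log q)$, so the discrepancy $\pi(x;q,a) - \pi(x)/\varphi(q)$ reduces to $\frac{1}{\varphi(q)}\sum_{\chi\neq\chi_0}\bar\chi(a)\pi(x,\chi)$ plus a harmless $O(\log q/\varphi(q))$ term.

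Next, I would square and sum over $a \in (\Z/q\Z)^*$. The orthogonality relation $\sum_a \bar\chi_1(a)\chi_2(a) = \varphi(q)$ if $\chi_1=\chi_2$ and $0$ otherwise collapses the cross terms in the expansion, while the cross term coming from the $O(\log q/\varphi(q))$ piece also vanishes because $\sum_a\chi(a)=0$ for every nontrivial $\chi$. This yields the clean Parseval identity
\[
\sum_{a\in(\Z/q\Z)^*}\left|\pi(x;q,a) - \frac{\pi(x)}{\varphi(q)}\right|^2 = \frac{1}{\varphi(q)}\sum_{\chi\neq\chi_0}|\pi(x,\chi)|^2 + O\!\left(\frac{\log^2 q}{\varphi(q)}\right),
\]
which is where the averaging saves a factor of $\varphi(q)$ over the trivial per-residue bound.

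Finally, to control each $\pi(x,\chi)$ I would invoke ERH. The explicit formula for $\psi(x,\chi)=\sum_{n\leq x}\chi(n)\Lambda(n)$ expresses it as a sum over the non-trivial zeros of $L(s,\chi)$, and ERH forces all such zeros onto the line $\mathrm{Re}(s)=1/2$, which after a careful truncation gives $\psi(x,\chi)\ll \sqrt{x}\,\log^2(qx)$ uniformly for $\chi\neq\chi_0$ and $q<x$. Partial summation then yields $\pi(x,\chi)\ll \sqrt{x}\,\log(qx)$, hence $\sum_{\chi\neq\chi_0}|\pi(x,\chi)|^2\ll \varphi(q)\cdot x\log^2(qx)$. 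Plugging this into the Parseval identity and using $q<x$ to absorb $\log(qx)$ into $\log x$ delivers the desired $\ll x(\log x)^2$ bound.

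The principal obstacle is the ERH estimate on the individual character sum $\pi(x,\chi)$, which rests on the explicit formula and a delicate truncation of the zero sum; this is classical material (Davenport, Chapter 20) and can be invoked as a black box, after which the remainder of the argument is essentially Parseval plus bookkeeping of lower-order terms.
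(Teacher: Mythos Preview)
The paper does not actually prove this statement: it is quoted in the preliminaries as a classical result of Tur{\'a}n and invoked as a black box. So there is no ``paper's own proof'' to compare against.

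That said, your sketch is the standard and correct argument. The key steps---expanding $\pi(x;q,a)$ over Dirichlet characters, isolating the principal character to extract $\pi(x)/\varphi(q)$ up to an $O(\log q/\varphi(q))$ error, applying orthogonality to collapse the sum over $a$ into $\frac{1}{\varphi(q)}\sum_{\chi\neq\chi_0}|\pi(x,\chi)|^2$, and then bounding each $|\pi(x,\chi)|$ by $O(\sqrt{x}\log(qx))$ under ERH via the explicit formula---are exactly how this is done in the literature (cf.\ Davenport, Chapter~20, and the effective version due to Oesterl\'e that the paper itself references elsewhere). Your bookkeeping of the lower-order terms is also correct: the cross term with the constant error vanishes by $\sum_a\bar\chi(a)=0$, and the diagonal contribution is $O(\log^2 q/\varphi(q))$, which is harmless.
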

That estimate is nontrivial in the large range $q\ll x/\log^{4+\ve}$,
and implies that \eqref{eq:valleepoussin} holds for all $q$ in that range
and \emph{almost all} $a\in(\Z/q\Z)^*$.

Averaging over the modulus as well, it is possible to obtain fully
unconditional estimates valid in a similarly wide range: this is a result
due to Barban~\cite{Barban66} and Davenport and Halberstam~\cite{DH66}.
We will use the following formulation due to
Gallagher~\cite{Gallagher67}, as stated in \cite[Ch.~29]{Davenport}.
\begin{theorem}[Barban--Davenport--Halberstam]
For any fixed $A > 0$ and any $Q$ such that $x(\log x)^{-A} < Q < x$, we
have:
\[ \sum_{q\leq Q} \sum_{a\in(\Z/q\Z)^*}
   \left| \pi(x;q,a) - \frac{\pi(x)}{\varphi(q)} \right|^2 \ll_A
   \frac{xQ}{\log x}.
\]
\end{theorem}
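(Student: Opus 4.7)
The plan is to follow the standard proof via Dirichlet characters and the multiplicative large sieve. First, by partial summation one replaces the counting function $\pi(x;q,a)$ with the Chebyshev function $\psi(x;q,a) = \sum_{n\leq x,\, n\equiv a\,(\mathrm{mod}\,q)} \Lambda(n)$; this costs at most a factor $\log^2 x$, so it suffices to prove
\[ \sum_{q\leq Q} \sum_{a\in(\Z/q\Z)^*} \left| \psi(x;q,a) - \frac{x}{\varphi(q)} \right|^2 \ll_A xQ\log x. \]

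Next, orthogonality of Dirichlet characters modulo $q$ gives
\[ \psi(x;q,a) - \frac{x}{\varphi(q)} = \frac{1}{\varphi(q)}\sum_{\chi\neq\chi_0}\bar\chi(a)\,\psi(x,\chi) + O\!\left(\frac{\log x}{\varphi(q)}\right), \]
where $\psi(x,\chi) = \sum_{n\leq x}\Lambda(n)\chi(n)$ and the error comes from the contribution of the principal character. Squaring and invoking Parseval in the variable $a$ transforms the inner sum over $(\Z/q\Z)^*$ into $\frac{1}{\varphi(q)}\sum_{\chi\neq\chi_0}|\psi(x,\chi)|^2$, up to a negligible term.

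I would then reduce the characters to primitive ones: each non-principal $\chi$ modulo $q$ is induced by a primitive $\chi^*$ of some conductor $d\mid q$ with $d>1$, and $|\psi(x,\chi)-\psi(x,\chi^*)| \ll \log q\,\log x$. Interchanging summations and using the standard estimate $\sum_{q\leq Q,\,d\mid q}1/\varphi(q) \ll (\log Q)/\varphi(d)$ reduces the task to bounding the weighted mean square
\[ \sum_{1<d\leq Q}\frac{1}{\varphi(d)}\sideset{}{^*}\sum_{\chi\bmod d}|\psi(x,\chi)|^2 \]
in which $\chi$ now ranges over primitive characters of modulus $d$.

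The final and decisive step is a hybrid bound, splitting the $d$-range at a threshold $D=(\log x)^B$ for $B$ a sufficiently large function of $A$. For $d\leq D$ one uses the Siegel--Walfisz theorem, which individually bounds each $|\psi(x,\chi^*)|$ by $x\exp(-c\sqrt{\log x})$; this makes the contribution of small conductors drop well below the target $xQ\log x$. For $d>D$ one invokes Gallagher's multiplicative large sieve inequality,
\[ \sum_{d\leq Q}\frac{d}{\varphi(d)}\sideset{}{^*}\sum_{\chi\bmod d}|\psi(x,\chi)|^2 \ll (x+Q^2)\sum_{n\leq x}\Lambda(n)^2 \ll (x+Q^2)\,x\log x, \]
so that, after dividing by the extra factor $d\geq D$ coming from the weight conversion $d/\varphi(d) \mapsto 1/\varphi(d)$, the large-conductor contribution is $\ll (x+Q^2)x\log x/D$, which fits under $xQ\log x$ precisely in the range $x(\log x)^{-A}<Q<x$. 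The main obstacle is this last quantitative matching: a direct large sieve for the von Mangoldt weights loses a logarithmic factor, and only the hybrid splitting of conductors, with Siegel--Walfisz absorbing the small moduli that the large sieve treats too crudely, recovers the sharp bound on the full range of $Q$ claimed by the theorem.
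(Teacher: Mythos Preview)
The paper does not prove this theorem at all: it is quoted in the preliminaries as a known result, with a citation to Gallagher's formulation as presented in Davenport's book, and is then used as a black box in \S\ref{sec:bdh}. So there is no ``paper's own proof'' to compare against.

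That said, your sketch is essentially the standard Gallagher argument that the paper is implicitly invoking, and the main ideas are all present and correctly arranged: orthogonality and Parseval to pass to $\sum_{\chi\neq\chi_0}|\psi(x,\chi)|^2/\varphi(q)$, reduction to primitive characters, and the hybrid estimate splitting conductors at $(\log x)^B$ so that Siegel--Walfisz handles the small moduli while the multiplicative large sieve handles the rest. Two points deserve a little more care if you actually write this out. First, in the large-conductor range you should work dyadically in $d$: for $U<d\leq 2U$ the large sieve gives a bound $\ll (x+U^2)x\log x$, and dividing by $U$ and summing over dyadic $U$ in $[D,Q]$ yields $\ll (x/D+Q)x\log x$, which is $\ll xQ\log x$ precisely when $D\geq x/Q$, i.e.\ $B\geq A$; your one-line ``divide by $D$'' version is morally right but hides this. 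Second, the passage from the $\psi$-version to the $\pi$-version by partial summation produces an integral $\int_2^x (\cdot)\,dt/(t\log^2 t)$ that involves the error term at \emph{all} $t\leq x$, not just at $x$; to close the argument you either need a maximal variant of the mean-square bound, or you should run the whole character-sum/large-sieve argument directly with $\sum_{p\leq x}\chi(p)$ in place of $\psi(x,\chi)$. Neither point is a new idea, but both are places where a careless write-up can lose a logarithm.
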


Finally, we will also need a few classical facts regarding Euler's totient
function (for example, \cite[Th. 328 \& 330]{hardy}).
\begin{lemma}
The following asymptotic estimates hold:
\begin{align}
\varphi(q) &\gg \frac{q}{\log\log q}, \\
\Phi(x)    &:=  \sum_{q\leq x} \varphi(q) = \frac{3x^2}{\pi^2} + O(x\log x).
\end{align}
\end{lemma}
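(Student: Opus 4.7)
The plan is to prove the two estimates separately, both using fairly classical techniques from elementary analytic number theory.

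For the lower bound on $\varphi(q)$, I would start from the multiplicative formula $\varphi(q)/q = \prod_{p\mid q}(1-1/p)$. Let $P$ denote the largest prime dividing $q$. Since $q$ is divisible by all primes in the set $\{p\mid q\}$ and the product of these primes divides $q$, we have $\prod_{p\mid q} p \leq q$, and in particular the primorial $\prod_{p\leq P,\ p\mid q} p \leq q$. The key observation is that the worst case for the product $\prod_{p\mid q}(1-1/p)$ occurs when $q$ is built out of as many small primes as possible, so one may assume $q = \prod_{p\leq P} p$. Taking logarithms and using Chebyshev's estimate $\vartheta(P) = \sum_{p\leq P}\log p \asymp P$, we get $P \ll \log q$. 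Then by Mertens' third theorem,
\[ \prod_{p\leq P}\Bigl(1-\frac1p\Bigr) \;\gg\; \frac{1}{\log P} \;\gg\; \frac{1}{\log\log q}, \]
which gives the desired lower bound $\varphi(q) \gg q/\log\log q$.

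For the asymptotic for $\Phi(x)$, I would use the Möbius-inversion identity $\varphi(n) = \sum_{d\mid n}\mu(d)\cdot(n/d)$ and swap the order of summation:
\[
  \Phi(x) \;=\; \sum_{q\leq x}\sum_{d\mid q}\mu(d)\,\frac{q}{d}
         \;=\; \sum_{d\leq x}\mu(d)\sum_{m\leq x/d} m
         \;=\; \sum_{d\leq x}\mu(d)\cdot\frac{\lfloor x/d\rfloor(\lfloor x/d\rfloor+1)}{2}.
\]
Replacing $\lfloor x/d\rfloor(\lfloor x/d\rfloor+1)/2$ by $x^2/(2d^2) + O(x/d)$, the error contribution is $O\bigl(x\sum_{d\leq x}1/d\bigr) = O(x\log x)$. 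For the main term, I would use the standard tail estimate $\sum_{d\leq x}\mu(d)/d^2 = 1/\zeta(2) + O(1/x) = 6/\pi^2 + O(1/x)$ to conclude that $\frac{x^2}{2}\sum_{d\leq x}\mu(d)/d^2 = 3x^2/\pi^2 + O(x)$, which is absorbed by the $O(x\log x)$ error.

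Neither step poses a serious difficulty; both are in Hardy and Wright, as the paper itself notes. The only point that needs some care is the appeal to Mertens' theorems (or equivalently a careful Chebyshev-type bound on $\vartheta$) in the first part, and the handling of the floor-function error and the tail of $\sum\mu(d)/d^2$ in the second part. Since the paper treats these as classical facts, I would keep the proof to no more than a few lines and cite \cite{hardy} for the underlying ingredients.
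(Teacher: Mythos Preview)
Your proposal is correct, and in fact goes beyond what the paper does: the paper gives no proof at all for this lemma, simply citing \cite[Th.~328 \& 330]{hardy} as the source of these classical facts. Your sketches are precisely the standard arguments found in Hardy--Wright (Mertens' theorem for the first estimate, M\"obius inversion with a tail bound on $\sum\mu(d)/d^2$ for the second), so your plan to keep the write-up short and cite \cite{hardy} aligns exactly with the paper's treatment.

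One minor wording issue in your first sketch: defining $P$ as ``the largest prime dividing $q$'' is not quite what you want, since not every prime below that $P$ need divide $q$. The clean version is to note that if $q$ has $k$ distinct prime factors, then $\varphi(q)/q \geq \prod_{i=1}^k (1-1/q_i)$ where $q_1<\cdots<q_k$ are the first $k$ primes, and $\prod q_i \leq q$ forces $q_k \ll \log q$ by Chebyshev. But this is a cosmetic fix; the substance of your argument is right.
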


\section{Close-to-uniform prime number generation with fewer random bits}
\label{sec:method}

\subsection{Basic algorithm}

A simple method to construct obviously uniformly distributed prime
numbers up to $x$ is to pick random numbers in $\{1,\dots,\lfloor
x\rfloor\}$ and retry until a prime is found. However, this method
consumes $\log_2 x$ bits of randomness per iteration (not counting the
amount of randomness consumed by primality testing), and hence an
expected amount of $(\log x)^2/\log 2$ bits of randomness to produce a
prime, which is quite large.

As mentioned in the introduction, we propose the following algorithm to
generate almost uniform primes while consuming fewer random bits: first
fix an integer
\begin{equation}
\label{eq:rangeq}
q\propto x^{1-\ve}
\end{equation}
and pick a random $a\in(\Z/q\Z)^*$. Then, search for prime numbers
$\leq x$ of the form $p = a + t\cdot q$.
This method, described as Algorithm~\ref{alg:basic}, only consumes
$\log_2 t = \ve\log_2 x$ bits of randomness per iteration, and the
probability of success at each iteration is 
$\sim \frac{\pi(x;q,a)}{x/q}$. Assuming that Conjecture~\ref{conj:fgm} is
true, which ensure that~\eqref{eq:valleepoussin} holds in the
range~\eqref{eq:rangeq}, this probability is about $q/\big(\varphi(q)\log
x\big)$, and the algorithm should thus consume roughly:
\begin{equation}
\label{eq:basicrandbits}
\ve\cdot\frac{\varphi(q)}{q}\cdot\frac{(\log x)^2}{\log 2}
\end{equation}
bits of
randomness on average: much less than the trivial algorihm. Moreover, we
can also show, under the same assumption, that the output distribution
is statistically close to uniform and has close to maximal entropy.

\begin{algorithm}[t]
\caption{Our basic algorithm.}
\label{alg:basic}
\begin{algorithmic}[1]
  \State Fix $q \propto x^{1-\ve}$
  \State $a \randgets (\Z/q\Z)^*$
    \label{algstep:basic-a}
    \Comment{considered as an element of $\{1,\dots,q-1\}$}
  \Repeatx{forever}
    \State $t \randgets \{0,\dots,\lfloor \frac{x-a}q \rfloor\}$
    \State $p \gets a + t\cdot q$
    \Ifx{$p$ is prime}
      \label{algstep:test}
      \textbf{return} $p$
  \EndRepeatx
\end{algorithmic}
\end{algorithm}

We establish those results in \S\ref{sec:fgm}, and show in
\S\ref{sec:basic-erh} that Tur{\'a}n's theorem can be used to obtain
nearly the same results under the Extended Riemann Hypothesis. ERH is not
sufficient to prove that Algorithm~\ref{alg:basic} terminates almost
surely, or to bound the expectation of the number of random bits it
consumes, due to the possibly large contribution of negligibly few values
of $a$. We can avoid these problems by modifying the algorithm slightly,
as discussed in \S\ref{sec:as-erh}. Finally, in \S\ref{sec:bdh}, we show
that unconditional results of the same type can be obtained using the
Barban--Davenport--Halberstam theorem, for another slightly different
variant of the algorithm.

Before turning to these analyses, let us make a couple of remarks on
Algorithm~\ref{alg:basic}. First, note that one is free to choose $q$ in
any convenient way in the range~\eqref{eq:rangeq}. For example, one could
choose $q$ as the largest power of $2$ less than $x^{1-\ve}$, so as to
make Step~\ref{algstep:basic-a} very easy. It is preferable, however, to
choose $q$ as a (small multiple of a) primorial, to minimize the ratio
$\varphi(q)/q$, making it as small as $\propto 1/\log\log q \sim
1/\log\log x$; this makes the expected number of iterations and the
expected amount~\eqref{eq:basicrandbits} of consumed randomness
substantially smaller. In that case, Step~\ref{algstep:basic-a} becomes
slightly more complicated, but this is of no consequence.

Indeed, our second observation is that Step~\ref{algstep:basic-a} is
always negligible in terms of running time and consumed randomness
compared to the primality testing loop that follows. Indeed, even the
trivial implementation (namely, pick a random $a\in\{0,\dots,q-1\}$ and
try again if $\gcd(a,q)\neq 1$) requires $q/\varphi(q) \ll \log\log q$
iterations on average. It is thus obviously much faster than the
primality testing loop, and consumes $\ll \log x\log\log x$ bits of
randomness, which is negligible compared to~\eqref{eq:basicrandbits}.
Furthermore, an actual implementation would take advantage of the known
factorization of $q$ and use a unit generation algorithm such as the one
proposed by Joye and Paillier~\cite{DBLP:conf/ches/JoyeP06}, which we can
show requires only $O(1)$ iterations on average.

Finally, while we will not discuss the details of the primality test of
Step~\ref{algstep:test}, and shall pretend that it returns exact results
(as the AKS algorithm~\cite{AKS} would, for example), we note that it is
fine (and in practice preferable) to use a probabilistic compositeness
test such as Miller--Rabin~\cite{Rabin} instead, provided that the number
of rounds is set sufficiently large as to make the error probability
negligible. Indeed, the output distribution of our algorithm then stays
statistically close to uniform, and the number of iterations is never
larger.

\subsection{Analysis under the Friedlander--Granville--Montgomery
conjecture}
\label{sec:fgm}

As mentioned above, it is straightforward to deduce from the
Friedlander--Granville--Montgomery conjecture that
Algorithm~\ref{alg:basic} terminates almost surely, and to bound its
expected number of iterations and amount of consumed randomness.

\begin{thm}
\label{th:termfgm}
Assume that Conjecture~\ref{conj:fgm} holds. Then
Algorithm~\ref{alg:basic} terminates almost surely, requires
$(1+o(1))\varphi(q)/q\cdot\log x$ iterations of the main loop on average,
and consumes
$ \big(\ve + o(1)\big)\cdot\frac{\varphi(q)}{q}\cdot\frac{(\log x)^2}{\log 2}
$
bits of randomness on average.
\end{thm}
\begin{proof}
Indeed, fix $q\propto x^{1-\ve}$.
Conjecture~\ref{conj:fgm} implies, uniformly over $a\in(\Z/q\Z)^*$:
\[ \left| \pi(x;q,a) - \frac{\pi(x)}{\varphi(q)} \right| \ll
   (x/q)^{1/2}\cdot x^{\ve/4} \propto x^{3\ve/4},
\]
which is negligible compared to $\pi(x)/\varphi(q) \gg x^\ve/\log x$. As
a result, we get $\pi(x;q,a) = (1+o(1))\pi(x)/\varphi(q) =
(1+o(1))/\varphi(q)\cdot x/\log x$ uniformly over
$a$, and the success probability of the main loop becomes:
\[ \frac{\pi(x;q,a)}{1 + \lfloor\frac{x-a}q\rfloor} =
\frac{q}{\varphi(q)}\cdot\frac{1+o(1)}{\log x} \]
which implies the stated results immediately.\qed
\end{proof}

Now let $X$ be the output distribution of Algorithm~\ref{alg:basic},
{i.e.} the distribution on the set of prime numbers $\leq x$ such that
Algorithm~\ref{alg:basic} outputs a prime $p$ with probability exactly
$\Pr[X = p]$. Clearly, we have, for all $(a,q)=1$ and all $t$ such that
$a+t\cdot q\leq x$ is prime:
\[ \Pr[X = a + t\cdot q] = \frac1{\varphi(q)}\cdot\frac1{\pi(x;q,a)}. \]
As a result, the squared Euclidean imbalance of $X$ is:
\begin{align*}
   \Delta_2^2(X)
&= \sum_{a\in(\Z/q\Z)^*} \sum_{a+tq\leq x\text{ prime}}
      \Big|\Pr[X=a+tq] - \frac1{\pi(x)}\Big|^2 +
      \sum_{p|q}\frac1{\pi(x)^2} \\
&= \sum_{a\in(\Z/q\Z)^*} \pi(x;q,a)
      \Big|\frac1{\varphi(q)}\cdot\frac1{\pi(x;q,a)} - \frac1{\pi(x)}\Big|^2 +
      \sum_{p|q}\frac1{\pi(x)^2} \\
&= \frac1{\pi(x)^2} \sum_{a\in(\Z/q\Z)^*} \frac1{\pi(x;q,a)}
      \Big|\pi(x;q,a) - \frac{\pi(x)}{\varphi(q)}\Big|^2 +
      \sum_{p|q}\frac1{\pi(x)^2} \\
&\ll \frac1{\pi(x)^2} \sum_{a\in(\Z/q\Z)^*} \frac{\log x}{x^\ve}\cdot x^{3\ve/2}
 \ll \frac{\log^3 x}{x^2} \cdot \varphi(q)x^{\ve/2}
 \ll \frac{\log^3 x}{x^{1+\ve/2}} \ll \frac1{x^{1+\ve/3}}.
\end{align*}
We can then deduce the following.

\begin{thm}
\label{th:statfgm}
Assume that Conjecture~\ref{conj:fgm} holds. Then the output distribution
of Algorithm~\ref{alg:basic} is statistically close to uniform, and its
collision entropy is only negligibly smaller than that of the uniform
distribution.
\end{thm}
\begin{proof}
Indeed, by~\eqref{eq:d1d2}, the statistical distance to the uniform 
distribution satisfies:
\[ \Delta_1(X) \leq \Delta_2(X)\sqrt{\pi(x)}
     \ll \frac{1}{x^{1/2+\ve/6}} \sqrt{\frac{x}{\log x}} \ll x^{-\ve/6},
\]
which is negligible. Moreover, the collision probability is:
\[ \beta(X) = \frac1{\pi(x)} + \Delta_2^2(X)
     = \frac1{\pi(x)}\left(1 + O\Big(\frac{\pi(x)}{x^{1+\ve/3}}\Big)\right)
     = \frac1{\pi(x)}\left(1 + o\big(x^{-\ve/3}\big)\right).
 \]
Hence:
\[ H_2(X) = \log_2\big(\pi(x)\big) - \log_2\big(1 + o(x^{-\ve/3})\big)
          = (H_2)_{\max} - o(x^{-\ve/3})
\]
as required.\qed
\end{proof}

\subsection{Analysis under the Extended Riemann Hypothesis}
\label{sec:basic-erh}

Assume the Extended Riemann Hypothesis, and denote by $\alpha$ the
fraction of all possible choices of $a\in(\Z/q\Z)^*$ such that the error
term $E(x;q,a) := \big| \pi(x;q,a) - \pi(x)/\varphi(q) \big|$ satisfies
$E(x;q,a) > x^{3\ve/4}$. Then, Tur{\'a}n's theorem asserts that:
\[ \sum_{a\in(\Z/q\Z)^*} E(x;q,a)^2 \ll x(\log x)^2, \]
and the left-hand side is greater or equal to
$\alpha \varphi(q)\cdot x^{3\ve/2}$ by
definition of $\alpha$. As a result, we get:
\[ \alpha \ll \frac{x^{1-3\ve/2}(\log x)^2}{\varphi(q)}
          \ll \frac{(\log x)^2\log\log x}{x^{\ve/2}}
\]
and hence $\alpha$ is negligible. Therefore, for all except at most a
negligible fraction of choices of $a\in(\Z/q\Z)^*$, we obtain that
$E(x;q,a) \leq x^{3\ve/4}$, and since $\pi(x)/\varphi(q) \gg x^\ve/\log
x$, this implies $\pi(x;q,a) = (1+o(1))\pi(x)/\varphi(q)$ as before. As a
result, under ERH, we obtain an analogue of Theorem~\ref{th:termfgm}
valid with overwhelming probability on the choice of $a$.
\begin{thm}
\label{th:termbasic-erh}
Assume ERH holds. Then
Algorithm~\ref{alg:basic} terminates with overwhelming probability.
Moreover, except for a negligible fraction of choices of the
class $a \bmod q$, it requires
$(1+o(1))\varphi(q)/q\cdot\log x$ iterations of the main loop on average,
and consumes
$ \big(\ve + o(1)\big)\cdot\frac{\varphi(q)}{q}\cdot\frac{(\log x)^2}{\log 2}
$
bits of randomness on average.
\end{thm}
Moreover, using Tur{\'a}n's theorem and the Cauchy--Schwarz inequality,
we can also establish under ERH alone the following analogue of
Theorem~\ref{th:statfgm}, regarding the output distribution of the
algorithm.
\begin{thm}
\label{th:stat1erh}
Assume ERH holds. Then the output distribution
of Algorithm~\ref{alg:basic} is statistically close to uniform, and its
collision entropy is no more than $O(\log\log x)$ bits smaller than that of the
uniform distribution.
\end{thm}
\begin{proof}
Algorithm~\ref{alg:basic} almost surely produces an output for a given
choice of $a$ if and only if $\pi(x;q,a)\neq0$, and this is no longer
certain under ERH. Therefore, the probability that the algorithm outputs
a prime $p = a + tq\leq x$ becomes:
\[ \Pr[X = a + tq] = \frac1{\varphi^*_x(q)}\cdot\frac1{\pi(x;q,a)}, \]
where $\varphi^*_x(q) = \#\{ a\in(\Z/q\Z)^*\ |\ \pi(x;q,a)\neq0 \}$. By
the previous discussion on the distribution of the values $\pi(x;q,a)$,
we know that $\varphi^*_x(q) = \varphi(q)\cdot\big(1-O(\alpha)\big)$.
As a result, a similar computation as in \S\ref{sec:fgm} gives:
\[ \Delta_2^2(X) = \frac1{\pi(x)^2}
     \sum_{\substack{a\in(\Z/q\Z)^* \\ \pi(x;q,a)\neq0}}
     \frac1{\pi(x;q,a)}\Big|\pi(x;q,a) - \frac{\pi(x)}{\varphi^*_x(q)}\Big|^2
   + \frac{\omega(q)}{\pi(x)^2},
\]
where $\omega(q)$ denotes as usual the number of prime factors of $q$.
Then, using the coarse lower bound $\pi(x;q,a)\geq1$ when
$\pi(x;q,a)\neq0$, we get:
\[ \Delta_2^2(X) \leq \frac{S^2 + \omega(q)}{\pi(x)^2} \]
where:
\begin{align*}
S  &= \sqrt{\sum_{a\in(\Z/q\Z)^*} 
        \Big|\pi(x;q,a) - \frac{\pi(x)}{\varphi^*_x(q)}\Big|^2} \\
&\leq \sqrt{\sum_{a\in(\Z/q\Z)^*} 
        \Big|\pi(x;q,a) - \frac{\pi(x)}{\varphi(q)}\Big|^2} +
      \sqrt{\sum_{a\in(\Z/q\Z)^*} \pi(x)^2 
        \Big|\frac{\varphi(q) - \varphi^*_x(q)}{\varphi(q)\cdot
        \varphi^*_x(q)}\Big|^2} \\
&=    O(x^{1/2}\log x) + \sqrt{\varphi(q)\pi(x)^2
        \Big|\frac{\alpha}{1-\alpha}\cdot\frac1{\varphi(q)}\Big|^2} \\
&\ll  x^{1/2}\log x + \frac{\pi(x)}{\varphi(q)^{1/2}}\cdot\alpha
 \ll  x^{1/2}\log x + \frac{x\log x(\log\log x)^2}{x^{1/2-\ve/2}x^{\ve/2}}
 \ll  x^{1/2}\log x(\log\log x)^2
\end{align*}
by Tur{\'a}n's theorem again. Hence:
\[ \Delta_2^2(X) \ll \frac{x\log^{2+\ve} x}{\pi(x)^2} \ll
    \frac{\log^{3+\ve}
    x}{\pi(x)}.
\]
This is enough to obtain a relatively good bound on the collision entropy:
\[ H_2(X) = \log_2(\pi(x)) - \log_2(\log^{3+\ve} x) = (H_2)_{\max} - O(\log\log x) \]
but isn't sufficient for bounding the statistical distance. However, a
direct computation using Tur{\'a}n's theorem and the Cauchy--Schwarz
inequality is enough:
\begin{align*}
   \Delta_1(X)
&= \sum_{\substack{a\in(\Z/q\Z)^* \\ \pi(x;q,a)\neq0}} \pi(x;q,a)
      \Big|\frac1{\varphi^*_x(q)}\cdot\frac1{\pi(x;q,a)} - \frac1{\pi(x)}\Big| +
   \sum_{p|q} \frac1{\pi(x)} \\
&= \frac1{\pi(x)} \sum_{\substack{a\in(\Z/q\Z)^* \\ \pi(x;q,a)\neq0}} 
      \Big|\pi(x;q,a) - \frac{\pi(x)}{\varphi^*_x(q)}\Big| +
   \frac{\omega(q)}{\pi(x)} \\
&\ll \frac1{\pi(x)}\cdot S\cdot\sqrt{\varphi^*_x(q)}
 \ll \frac{\log x}x\cdot x^{1/2}\log^2 x\cdot\sqrt{q}
 \ll \frac{\log^3 x}{x^{1/2}} \cdot x^{1/2-\ve/2}
 \ll \frac{\log^3 x}{x^{\ve/2}},
\end{align*}
which proves that the distribution is indeed statistically close to
uniform.\qed
\end{proof}

\subsection{Achieving almost sure termination under ERH}
\label{sec:as-erh}

Theorem~\ref{th:termbasic-erh} above is somewhat unsatisfactory, as we
have to ignore a negligible but possibly nonzero fraction of all values
$a\bmod q$ to obtain a bound on the average number of iterations and on
the randomness consumed by Algorithm~\ref{alg:basic} under ERH. But this
is unavoidable for that algorithm: as mentioned above, it is not known
whether ERH implies that for $q\propto x^{1-\ve}$, all $a\in(\Z/q\Z)^*$
satisfy $\pi(x;q,a)\neq 0$. And if an $a$ exists such that
$\pi(x;q,a)=0$, the choice of that $a$ in Step~\ref{algstep:basic-a} of
Algorithm~\ref{alg:basic}, however unlikely, is a case of
non-termination: as a result, the existence of such an $a$ prevents any
nontrivial bound on average running time or average randomness.

We propose to circumvent that problem by falling back to the trivial
algorithm (pick a random $p<x$, check whether it is prime and try again
if not) in case too many iterations of the main loop have been carried
out. This variant is presented as Algorithm~\ref{alg:erh}.

\begin{algorithm}[b]
\caption{A variant which terminates almost surely under ERH.}
\label{alg:erh}
\begin{algorithmic}[1]
  \State Fix $q \propto x^{1-\ve}$
  \State $a \randgets (\Z/q\Z)^*$
    \label{algstep:erh-a}
    \Comment{considered as an element of $\{1,\dots,q-1\}$}
  \Repeatx{$T=\log^2 x$ times}
    \State $t \randgets \{0,\dots,\lfloor \frac{x-a}q \rfloor\}$
    \State $p \gets a + t\cdot q$
    \Ifx{$p$ is prime}
      \textbf{return} $p$
  \EndRepeatx
  \Repeatx{forever}
    \State $p \randgets \{1,\dots,\lfloor x \rfloor\}$
    \Ifx{$p$ is prime}
      \textbf{return} $p$
  \EndRepeatx
\end{algorithmic}
\end{algorithm}

Clearly, since Algorithm~\ref{alg:erh} is the same as
Algorithm~\ref{alg:basic} except for the possible fallback to the trivial
algorithm, which has a perfectly uniform output distribution, the output
distribution of the variant is at least as close to uniform as the
original algorithm. In other words, the analogue of
Theorem~\ref{th:stat1erh} holds, with the same proof.
\begin{thm}
\label{th:stat2erh}
Assume ERH holds. Then the output distribution
of Algorithm~\ref{alg:erh} is statistically close to uniform, and its
collision entropy is no more than $O(\log\log x)$ bits smaller than that of the
uniform distribution.
\end{thm}
Moreover, as claimed above, we can obtain the following stronger analogue
of Theorem~\ref{th:termbasic-erh}.
\begin{thm}
\label{th:term2erh}
Assume ERH holds. Then
Algorithm~\ref{alg:erh} terminates almost surely, requires
$(1+o(1))\varphi(q)/q\cdot\log x$ iterations of the main loop on average,
and consumes
$ \big(\ve + o(1)\big)\cdot\frac{\varphi(q)}{q}\cdot\frac{(\log x)^2}{\log 2}
$
bits of randomness on average.
\end{thm}
\begin{proof}
Algorithm~\ref{alg:erh} terminates almost surely because the trivial
algorithm does. One can estimate its average number of iterations as
follows. Denote by $\varpi(t)$ the probability that
Algorithm~\ref{alg:erh} terminates after exactly $t$ iterations, and
$\varpi_a(t)$ the probability of the same event conditionally to $a$
being chosen in Step~\ref{algstep:erh-a}. We have:
\begin{align*}
  \varpi_a(t) &= \begin{cases}
     \displaystyle
     \bigg(1 - \frac{\pi(x;q,a)}{1+\lfloor\frac{x-a}q\rfloor}\bigg)^{t-1}
     \cdot\frac{\pi(x;q,a)}{1+\lfloor\frac{x-a}q\rfloor}
     & \text{for $t \leq T$;} \\
     \displaystyle
     \bigg(1 - \frac{\pi(x;q,a)}{1+\lfloor\frac{x-a}q\rfloor}\bigg)^{T}
     \bigg(1 - \frac{\pi(x)}{\lfloor x\rfloor}\bigg)^{t-T-1}
     \cdot\frac{\pi(x)}{\lfloor x\rfloor}
     & \text{otherwise.}
   \end{cases} \\
  \varpi(t) &= \frac1{\varphi(q)}\sum_{a\in(\Z/q\Z)^*} \varpi_a(t).
\end{align*} 
Moreover, the expected number $N$ of iterations in
Algorithm~\ref{alg:erh} is given by $N = \sum_{t\geq 1} t\varpi(t)$.
We can denote by $N_a = \sum_{t\geq 1} t\varpi_a(t)$ the contribution of
a certain choice $a\in(\Z/q\Z)^*$.

Now, recall from \S\ref{sec:basic-erh} that $\pi(x;q,a)$ is within a
distance $\ll x^{3\ve/4}\log x$ of $\pi(x)/\varphi(q)$, except
for a fraction $\alpha\ll (\log x)^3/x^{\ve/2}$ of all possible
choices of $a$. If we
denote by $A$ the set of ``bad'' choices of $a$, we can write, for all
$a\in A$:
\[
  \varpi_a(t) \leq \begin{cases}
     \displaystyle
     1
     & \text{for $t \leq T$;} \\
     \displaystyle
     \Big(1 - \frac{\pi(x)}{\lfloor x\rfloor}\Big)^{t-T-1}
     \cdot\frac{\pi(x)}{\lfloor x\rfloor}
     & \text{otherwise.}
   \end{cases} \\
\]
Hence, if we let $\xi := \pi(x)/\lfloor x\rfloor$, we get:
\begin{align*}
 N_a &\leq \sum_{t=1}^T t + \sum_{t=T+1}^{+\infty} t(1-\xi)^{t-T-1} \xi
  = T(T+1)/2 + \sum_{k=1}^{+\infty} (T+k)(1-\xi)^{k-1}\xi \\
 N_a &\leq T(T+1)/2 + T\frac{\xi}{\xi}
+ \frac{\xi}{\xi^2} \leq T(T+3)/2 + \frac1\xi \ll \log^4 x.
\end{align*}
On the other hand, for $a\not\in A$, we have $\xi_a :=
\frac{\pi(x;q,a)}{1+\lfloor\frac{x-a}q\rfloor} =
\frac{q}{\varphi(q)}\cdot\frac{1+o(1)}{\log x}$. Therefore:
\begin{align*}
  N_a &= \sum_{t=1}^T t(1-\xi_a)^{t-1}\xi_a + (1-\xi_a)^T \sum_{t=T+1}^{+\infty}
t(1-\xi)^{t-T-1} \xi \\
  N_a &= \frac1{\xi_a} - \sum_{t=T+1}^{+\infty} t(1-\xi_a)^{t-1}\xi_a +
         (1-\xi_a)^T \sum_{t=T+1}^{+\infty} t(1-\xi)^{t-T-1} \xi \\
  \Big| N_a - \frac1{\xi_a} \Big| &\leq (1-\xi_a)^T
	 \sum_{k=1}^{+\infty} \Big[ (T+k)(1-\xi)^{k-1} \xi 
	 + (T+k)(1-\xi_a)^{k-1} \xi_a \Big] \\
  \Big| N_a - \frac1{\xi_a} \Big| &\leq
	 \exp(-T\xi_a)\cdot(2T+1/\xi+1/\xi_a) \\
  \Big| N_a - \frac1{\xi_a} \Big| &\leq
	 \exp\Big(-(1+o(1))\frac{q}{\varphi(q)}\log x\Big)
	 \cdot(2T+1/\xi+1/\xi_a) \ll \frac1{x^{1-\ve}}.
\end{align*}
As a result, we obtain:
\begin{align*}
  N &= \frac1{\varphi(q)}\sum_{a\in(\Z/q\Z)^*} N_a
     = \Big(1-O(\log^3 x/x^{\ve/2})\Big)\cdot\Big(\frac1{\xi_a} +
       O(1/x^{1-\ve})\Big) + O(\log^3 x/x^{\ve/2})\cdot O(\log^4 x) \\
    &= \frac1{\xi_a} + O\Big(\frac{\log^7 x}{x^{\ve/2}}\Big)
     = (1+o(1))\frac{\varphi(q)}q\cdot \log x
 \end{align*}
as required. As for the expected number $R$ of random bits consumed by
the algorithm, it is given (ignoring the negligible amount necessary to
pick $a$) by:
\[ R = \frac{\log x}{\log 2}\bigg( \sum_{t=1}^T \ve t\cdot\varpi(t) +
\sum_{t=T+1}^{+\infty} (\ve T + t - T)\cdot\varpi(t) \bigg) \]
and the stated estimate is obtained by an exactly analogous computation.
\qed
\end{proof}

\subsection{An unconditional algorithm}
\label{sec:bdh}

Finally, we propose yet another variant of our algorithm for which both
almost sure termination and uniformity bounds can be established
unconditionally. 
The idea is to no longer use a fixed modulus $q$, but to pick it
uniformly at random instead in the range $\{1,\dots,Q\}$ where $Q\propto
x(\log x)^{-A}$; uniformity bounds can then be deduced from the
Barban--Davenport--Halberstam theorem. Unfortunately, since $Q$ is only
polynomially smaller than $x$, we can no longer prove that the output
distribution is statistically close to uniform: the statistical distance
is polynomially small instead, with an arbitrarily large exponent
depending only on the constant $A$. On the other hand, termination is
obtained as before by falling back to the trivial algorithm after a
while, and since $q$ is often very close to $x$, we get an even better
bound on the number of consumed random bits.

\begin{algorithm}[t]
\caption{An unconditional variant.}
\label{alg:uncond}
\begin{algorithmic}[1]
  \State Fix $Q \propto x(\log x)^{-A}$ even
  \State $q \randgets \{Q/2+1,\dots,Q\}$
    \label{algstep:uncond_aq}
  \State $a \randgets \{0,\dots,q-1\}$
  \State \textbf{if} {$\gcd(a,q)\neq1$}
    \textbf{then goto} step \ref{algstep:uncond_aq}
    \label{algstep:uncond_aqcheck}
  \Repeatx{$T=\log^2 x$ times}
    \State $t \randgets \{0,\dots,\lfloor \frac{x-a}q \rfloor\}$
    \State $p \gets a + t\cdot q$
    \Ifx{$p$ is prime}
      \textbf{return} $p$
  \EndRepeatx
  \Repeatx{forever}
    \State $p \randgets \{1,\dots,\lfloor x \rfloor\}$
    \Ifx{$p$ is prime}
      \textbf{return} $p$
  \EndRepeatx
\end{algorithmic}
\end{algorithm}

Our proposed unconditional algorithm is described as
Algorithm~\ref{alg:uncond}. It picks the pair $(q,a)$ uniformly at random
among pairs of integers such that $q\in\{Q/2+1,\dots, Q\}$ and $a$ is a
standard representative of the classes in $(\Z/q\Z)^*$. There are:
\[ F(Q) := \sum_{Q/2<q\leq Q}\varphi(q) = \Phi(Q)-\Phi(Q/2) = \frac9{4\pi^2}Q^2 + O(Q\log Q) \]
possible such pairs, and we claim that for all except a polynomially
small fraction of them, $\pi(x;q,a)$ is close to $\pi(x)/\varphi(q)$.
Indeed, denote by $\alpha$ the fraction of all pairs $(q,a)$ such
that:
\[ E(x;q,a) := \Big|\pi(x;q,a) - \frac{\pi(x)}{\varphi(q)}\Big| >
   (\log x)^{3A/4}.
\] 
Since $\pi(x)/\varphi(q) \gg x/Q \propto (\log x)^A$, we get
$\pi(x;q,a) = (1+o(1))\pi(x)/\varphi(q)$ for all pairs $(q,a)$ except a
fraction of at most $\alpha$. Moreover, we have the following trivial
lower bound:
\[ \sum_{Q/2 < q\leq Q} \sum_{a\in(\Z/q\Z)^*} E(x;q,a)^2 \geq
   \big[\alpha F(Q)\big]\cdot(\log x)^{3A/2}.
\]
On the other hand, the Barban--Davenport--Halberstam theorem ensures that
the sum on the left-hand side is $\ll xQ/\log Q$. As a result, we get:
\[
    \alpha 
\ll \frac{(\log x)^{-3A/2}}{F(Q)} \cdot \frac{xQ}{\log Q} 
\ll \frac{x(\log x)^{-3A/2}}{Q\log Q}
\ll \frac{x(\log x)^{-3A/2}}{x(\log x)^{-A+1}}
\ll \frac1{(\log x)^{A/2}}.
\]
This allows us to deduce the analogue of Theorem~\ref{th:term2erh} for
Algorithm~\ref{alg:uncond}.
\setcounter{section}{3}
\setcounter{subsection}{5}
\begin{thm}
\label{th:termuncond}
Algorithm~\ref{alg:uncond} with $A>6$ terminates almost surely, requires
$(1+o(1))\varphi(q)/q\cdot\log x$ iterations of the main loop on average,
and consumes:
\[ \big(A + o(1)\big)\cdot\frac{\varphi(q)}{q}\cdot\frac{\log x\log\log x}{\log 2}
\]
bits of randomness on average.
\end{thm}
\begin{proof}
Algorithm~\ref{alg:uncond} terminates almost surely because the trivial
algorithm does. One can estimate its average number of iterations as
follows. Denote by $\varpi(t)$ the probability that
Algorithm~\ref{alg:uncond} terminates after exactly $t$ iterations, and
$\varpi_{q,a}(t)$ the probability of the same event conditionally to the
pair $(q,a)$ being chosen in
Steps~\ref{algstep:uncond_aq}--\ref{algstep:uncond_aqcheck}. We have:
\begin{align*}
  \varpi_{q,a}(t) &= \begin{cases}
     \displaystyle
     \bigg(1 - \frac{\pi(x;q,a)}{1+\lfloor\frac{x-a}q\rfloor}\bigg)^{t-1}
     \cdot\frac{\pi(x;q,a)}{1+\lfloor\frac{x-a}q\rfloor}
     & \text{for $t \leq T$;} \\
     \displaystyle
     \bigg(1 - \frac{\pi(x;q,a)}{1+\lfloor\frac{x-a}q\rfloor}\bigg)^{T}
     \bigg(1 - \frac{\pi(x)}{\lfloor x\rfloor}\bigg)^{t-T-1}
     \cdot\frac{\pi(x)}{\lfloor x\rfloor}
     & \text{otherwise.}
   \end{cases} \\
  \varpi(t) &= \frac1{F(Q)}\sum_{Q/2< q\leq Q}\sum_{a\in(\Z/q\Z)^*}
\varpi_{q,a}(t).
\end{align*} 
Moreover, the expected number $N$ of iterations in
Algorithm~\ref{alg:erh} is given by $N = \sum_{t\geq 1} t\varpi(t)$.
We can denote by $N_{q,a} = \sum_{t\geq 1} t\varpi_{q,a}(t)$ the contribution of
a certain choice $(q,a)$.

As we have just seen, $\pi(x;q,a) = (1+o(1))\pi(x)/\varphi(q)$, except
perhaps for a fraction $\alpha\ll (\log x)^{-A/2}$ of all choices of
$(q,a)$. If we denote by $A$ the set of ``bad'' choices of $(q,a)$, we
can write, as before, that for all $(q,a)\in A$:
\[
  \varpi_{q,a}(t) \leq \begin{cases}
     \displaystyle
     1
     & \text{for $t \leq T$;} \\
     \displaystyle
     \Big(1 - \frac{\pi(x)}{\lfloor x\rfloor}\Big)^{t-T-1}
     \cdot\frac{\pi(x)}{\lfloor x\rfloor}
     & \text{otherwise.}
   \end{cases} \\
\]
Hence, if we let $\xi = \pi(x)/\lfloor x\rfloor$, we again obtain:
\[
 N_{q,a} \leq \sum_{t=1}^T t + \sum_{t=T+1}^{+\infty} t(1-\xi)^{t-T-1} \xi
 \leq T(T+3)/2 + \frac1\xi \ll \log^4 x.
\]
On the other hand, for $(q,a)\not\in A$, we have $\xi_{q,a} :=
\frac{\pi(x;q,a)}{1+\lfloor\frac{x-a}q\rfloor} =
\frac{q}{\varphi(q)}\cdot\frac{1+o(1)}{\log x}$. Therefore, as before:
\[
  \Big| N_{q,a} - \frac1{\xi_{q,a}} \Big| \leq
	 \exp\Big(-(1+o(1))\frac{q}{\varphi(q)}\log x\Big)
	 \cdot(2T+1/\xi+1/\xi_{q,a}) \ll \frac1{x^{1-\ve}}.
\]
As a result, we get:
\begin{align*}
  N &= \frac1{F(Q)}\sum_{Q/2<q\leq Q}\sum_{a\in(\Z/q\Z)^*} N_{q,a}\\
    &= \Big(1-O\big((\log x)^{-A/2}\big)\Big)\cdot\Big(\frac1{\xi_{q,a}} +
       O(1/x^{1-\ve})\Big) + O\big((\log x)^{-A/2}\big)\cdot O(\log^2 x) \\
    &= \frac1{\xi_{q,a}} + O\big((\log x)^{4-A/2}\big)
     = (1+o(1))\frac{\varphi(q)}q\cdot \log x
\end{align*}
as required, since $4-A/2 < 1$.
As for the expected number $R$ of random bits consumed by
the algorithm, it is now given by:
\[ R = \frac{A\log\log x}{\log 2} \sum_{t=1}^T t\cdot\varpi(t) +
 \sum_{t=T+1}^{+\infty} \Big(T\cdot\frac{A\log\log x}{\log 2} + (t -
T)\cdot\frac{\log x}{\log 2}\Big)\cdot\varpi(t)
\]
where we have again ignored the random bits necessary to pick the pair $(q,a)$,
since we need only $\frac{Q(3Q+2)}{8F(Q)} \sim \pi^2/6$ iterations
of the loop from Step~\ref{algstep:uncond_aq} to
Step~\ref{algstep:uncond_aqcheck} to select it, and hence $O(\log x)$
random bits. The stated estimate is obtained by
essentially the same computation as for $N$.\qed
\end{proof}

We now turn to estimates on the uniformity of the output distribution of
the algorithm. For that purpose, we consider instead the output
distribution $X$ of Algorithm~\ref{alg:uncondnofall}, the variant of
Algorithm~\ref{alg:uncond} in which no bound is set to the number of
iterations of the main loop
(Steps~\ref{algstep:uncond-loopstart}--\ref{algstep:uncond-loopend}),
{i.e.} with no fallback to the trivial algorithm. Clearly, since the
trivial algorithm has a perfectly uniform output distribution, the 
output distribution of Algorithm~\ref{alg:uncond} is at least as close to
uniform as $X$.

\begin{algorithm}[t]
\caption{An variant of Algorithm~\ref{alg:uncond} with no fallback.}
\label{alg:uncondnofall}
\begin{algorithmic}[1]
  \State Fix $Q \propto x(\log x)^{-A}$ even
  \State $q \randgets \{Q/2+1,\dots,Q\}$
  \State $a \randgets \{0,\dots,q-1\}$
  \State \textbf{if} {$\gcd(a,q)\neq1$}
    \textbf{then goto} step \ref{algstep:uncond_aq}
  \Repeatx{forever}
    \label{algstep:uncond-loopstart}
    \State $t \randgets \{0,\dots,\lfloor \frac{x-a}q \rfloor\}$
    \State $p \gets a + t\cdot q$
    \Ifx{$p$ is prime}
      \textbf{return} $p$
  \EndRepeatx
    \label{algstep:uncond-loopend}
\end{algorithmic}
\end{algorithm}

\begin{thm}
\label{th:statuncond}
The output distribution
of Algorithm~\ref{alg:uncondnofall} (and hence also
Algorithm~\ref{alg:uncond}) has a statistical distance $\Delta_1\ll
(\log x)^{(1-A)/2}$ to the uniform distribution. In particular, this
distance is polynomially small as long as $A>1$.
\end{thm}
\begin{proof}
Algorithm~\ref{alg:uncondnofall} produces an output (almost surely)
for exactly those choices of $(q,a)$ such that $\pi(x;q,a)\neq 0$. Let us
denote by $F^*_x(Q)$ the number of such choices. Clearly, with
notation from above, we have:
\[ F^*_x(Q) = \sum_{Q/2<q\leq Q} \varphi^*_x(q)\qquad\textrm{and}\qquad
   1-\alpha \leq \frac{F^*_x(Q)}{F(Q)} \leq 1.
\]
Then, the probability that Algorithm~\ref{alg:uncondnofall} outputs a
given prime $p\leq x$ can be written as:
\[ \Pr[X=p] = \frac1{F^*_x(Q)} \sum_{\substack{Q/2<q\leq Q\\ p\nmid q}}
     \frac1{\pi(x;q,p\bmod q)}.
\]
Therefore, we have:
\begin{align*}
   \Delta_1(X)
&= \sum_{p\leq x} \Big| \Pr[X=p] - \frac1{\pi(x)} \Big| \\
&= \sum_{p\leq x} \Bigg| \frac1{F^*_x(Q)} \sum_{\substack{Q/2<q\leq Q\\
   p\nmid q}} \frac1{\pi(x;q,p\bmod q)} - \frac1{\pi(x)} \Bigg| \\
&= \frac1{F^*_x(Q)} \sum_{p\leq x} \Bigg| \sum_{\substack{Q/2<q\leq Q\\
   p\nmid q}} \frac1{\pi(x;q,p\bmod q)} - 
   \sum_{Q/2<q\leq Q}\frac{\varphi^*_x(q)}{\pi(x)} \Bigg| \\
&\leq \frac1{F^*_x(Q)} \sum_{p\leq x} \sum_{\substack{Q/2<q\leq Q\\
   p\nmid q}} \Big| \frac1{\pi(x;q,p\bmod q)} -
   \frac{\varphi^*_x(q)}{\pi(x)} \Big| + \frac1{F^*_x(Q)} \sum_{p\leq x}
   \sum_{\substack{Q/2<q\leq Q\\ p|q}} \frac{\varphi^*_x(q)}{\pi(x)} \\
&\leq \frac1{F^*_x(Q)} \sum_{Q/2<q\leq Q} \sum_{\substack{p\leq x\\
   p\nmid q}} \Big| \frac1{\pi(x;q,p\bmod q)} -
   \frac{\varphi^*_x(q)}{\pi(x)} \Big| + \frac1{F^*_x(Q)} \sum_{Q/2<q\leq
   Q} \frac{\omega(q)\varphi^*_x(q)}{\pi(x)} \\
&\leq \frac1{F^*_x(Q)\pi(x)} \sum_{\substack{Q/2<q\leq Q\\
   a\in(\Z/q\Z)^*
   }} \Big| \pi(x) -
   \varphi^*_x(q)\pi(x;q,a) \Big| + \frac{O(\log Q)}{\pi(x)}.
\end{align*}
We can then bound the sum over $(q,a)$ of $\big| \pi(x) -
\varphi^*_x(q)\pi(x;q,a) \big|$ as $D+D^*$, where:
\[ D = \sum_{\substack{Q/2<q\leq Q\\ a\in(\Z/q\Z)^* }}
         \Big| \pi(x) - \varphi(q)\pi(x;q,a) \Big|
   \quad\text{and}\quad
   D^* = \sum_{\substack{Q/2<q\leq Q\\ a\in(\Z/q\Z)^* }}
         \big| \varphi(q) - \varphi^*_x(q) \big|\cdot\pi(x;q,a).
\]
Now, on the one hand:
\begin{align*}
   D^*
&= \sum_{Q/2<q\leq Q} \big| \varphi(q) - \varphi^*_x(q) \big|
   \sum_{a\in(\Z/q\Z)^*} \pi(x;q,a) \\
&\leq \big( F(Q) - F^*_x(Q) \big) \cdot \pi(x)
 \leq \alpha F(Q)\pi(x)
 \ll  x^3(\log x)^{-5A/2-1},
\end{align*}
and on the other hand, applying the Cauchy--Schwarz inequality and the
Barban--Davenport--Halberstam theorem:
\begin{align*}
   D
&= \sum_{\substack{Q/2<q\leq Q\\ a\in(\Z/q\Z)^* }}
     \varphi(q)\cdot\Big| \pi(x;q,a) - \frac{\pi(x)}{\varphi(q)} \Big| 
 \leq Q \sum_{\substack{Q/2<q\leq Q\\ a\in(\Z/q\Z)^* }}
          \Big| \pi(x;q,a) - \frac{\pi(x)}{\varphi(q)} \Big| \\
&\leq Q \cdot \sqrt{F(Q)} \cdot \sqrt{\frac{xQ}{\log Q}}
 \ll  x^3(\log x)^{-5A/2-1/2}.
\end{align*}
As a result, we obtain:
\[ \Delta_1(X) \ll \frac1{F^*_x(Q)\pi(x)}\cdot(D+D^*)
	\ll \frac{(\log x)^{2A+1}}{x^3}\cdot\frac{x^3}{(\log
	x)^{5A/2+1/2}} \ll \frac1{(\log x)^{(A-1)/2}}.
\]\qed

\end{proof}

\section{Comparison with other prime number generation algorithms}

In this section, we compare other prime number generation algorithms to
our method. In \S\ref{sec:why} we show that the output distribution of
Brandt and Damg\aa{}rd's PRIMEINC algorithm exhibits significant biases
(a fact that is intuitively clear, but which we make quite precise), and
in \S\ref{sec:maurer}, we discuss the advantages and drawbacks of our
method compared to that of Maurer, as the only previous work emphasizing
a close to uniform output distribution.

\subsection{PRIMEINC}
\label{sec:why}

Previous works on the generation of prime numbers, such as
\cite{DBLP:conf/crypto/BrandtD92,DBLP:conf/ches/JoyeP06},
provide a proof (based on rather strong assumptions) that the
output distribution of their algorithm has an entropy not much smaller
than the entropy of the uniform distribution. This is a reasonable
measure of the inability of an adversary to guess which particular prime
was output by the algorithm, but it doesn't rule out the possibility of
gaining some information about the generated primes. In particular, it
doesn't rule out the existence of an efficient distinguisher between the
output distribution and the uniform one.

Consider for example the PRIMEINC algorithm studied by Brandt and
Damg\aa{}rd in~\cite{DBLP:conf/crypto/BrandtD92}. In essence, it
consists in picking a random integer $y < x$ and returning the smallest
prime greater or equal to $y$. There are some slight technical
differences between this description and the actual PRIMEINC\footnote{To
wit, Brandt and Damg\aa{}rd restrict their attention to odd numbers
$x/2<y<x$, and set an upper bound to the number of iterations in the
algorithm}, but they have essentially no bearing on the following
discussion, so we can safely ignore them.

It is natural to suspect that the distribution of the output of this
algorithm is quite different from the uniform distribution: for example,
generating the second prime of a twin prime pair, i.e. a prime $p$ such
that $p-2$ is also prime, is abnormally unlikely. More precisely, if one
believes the twin prime conjecture, the proportion of primes of that form
among all primes up to $x$ should be $\sim 2c_2/\log x$, where
$c_2\approx 0.66$ is the twin prime constant. On the other hand, PRIMEINC
outputs such a $p$ if and only if it initially picks $y$ as $p$ or $p-1$.
Therefore, we expect the frequency of such primes $p$ in the output of
PRIMEINC to be much smaller, about $4c_2/(\log x)^2$. This provides an
efficient distinguisher between the output of PRIMEINC and the uniform
distribution.

More generally, it is easy to see that the method used
in~\cite{DBLP:conf/crypto/BrandtD92} to obtain the lower bound on the
entropy of the output distribution of PRIMEINC can also provide a
relatively large constant lower bound on the statistical distance to the
uniform distribution. Indeed, the method relies on the following result,
obtained by a technique first proposed by Gallagher~\cite{Gallagher}.

\def\mycite{\cite[Lemma 5]{DBLP:conf/crypto/BrandtD92}}
\begin{lemma}[\mycite]
\label{lem:gallagher}
Assume the prime $r$-tuple conjecture, and let $F_h(x)$ denote the number
of primes $p\leq x$ such that the largest prime $q$ less than $p$ satisfies
$p-q\leq h$. Then for any constant $\lambda$,
\[ F_{\lambda\log x}(x) = \frac{x}{\log
x}\big(1-e^{-\lambda}\big)(1+o(1)) \]
as $x\to+\infty$.
\end{lemma}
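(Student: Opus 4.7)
Set $G_h(x) = \pi(x) - F_h(x)$, the count of primes $p \leq x$ with no prime in $[p-h, p-1]$. By the prime number theorem the lemma is equivalent to
\[ G_h(x) = \frac{x}{\log x}\bigl(e^{-\lambda} + o(1)\bigr), \]
and this is what I would aim for. The approach, going back to Gallagher, is inclusion--exclusion over the possible prime ``neighbours'' of $p$. Writing $\mathbf{1}_q$ for the indicator that $q$ is prime,
\[ G_h(x) = \sum_{p \leq x} \prod_{j=1}^{h}(1 - \mathbf{1}_{p-j}) = \sum_{S \subseteq \{1,\ldots,h\}} (-1)^{|S|} \pi_S(x), \]
where $\pi_S(x)$ counts primes $p \leq x$ such that $p-s$ is also prime for every $s \in S$. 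Each $\pi_S(x)$ is the counting function of the prime constellation $\{0\} \cup \{-s : s \in S\}$, so the prime $r$-tuple conjecture applies directly.

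For $|S| = k$, the conjecture gives $\pi_S(x) \sim \mathfrak{S}(S)\, x/(\log x)^{k+1}$, where $\mathfrak{S}(S)$ denotes the singular series. The key auxiliary ingredient is Gallagher's mean-value identity
\[ \sum_{\substack{S \subseteq \{1,\ldots,h\}\\ |S| = k}} \mathfrak{S}(S) = \frac{h^k}{k!}\bigl(1 + o(1)\bigr) \qquad (h \to \infty,\ k\text{ fixed}), \]
which says that the singular series averages to $1$ over admissible $k$-tuples. Grouping the inclusion--exclusion by $|S| = k$ and substituting $h = \lambda \log x$ formally gives
\[ G_h(x) \sim \frac{x}{\log x} \sum_{k \geq 0} (-1)^k \frac{\lambda^k}{k!} = \frac{x}{\log x}\, e^{-\lambda}, \]
and subtracting from $\pi(x) \sim x/\log x$ yields the claim. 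The rest of the computation is morally the Taylor expansion of $e^{-\lambda}$.

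The main obstacle is making this inclusion--exclusion rigorous: the prime $r$-tuple conjecture is only asymptotic, whereas $|S|$ ranges up to $h = \lambda \log x$, and the error terms cannot be summed naively. The standard remedy is to truncate at a slowly growing threshold $k_0 = k_0(x) \to \infty$, apply the conjecture term-by-term for $k \leq k_0$ (where the accumulated error is absorbed by the main term), and control the tail $k > k_0$ by an unconditional upper-bound sieve (Brun or Selberg), which yields $\pi_S(x) \ll \mathfrak{S}(S)\, x/(\log x)^{k+1}$ with an absolute implied constant. The tail is then harmless because $\sum_{k > k_0} \lambda^k/k! \to 0$. Once this truncation step is carried out carefully -- together with the proof of Gallagher's singular-series identity, which is itself a nontrivial sieve/combinatorial computation -- the rest of the argument is routine.
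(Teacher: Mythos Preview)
The paper does not prove this lemma: it is quoted verbatim as \cite[Lemma~5]{DBLP:conf/crypto/BrandtD92} and only used as a black box in the lower bound for $\Delta_1'$. Your sketch is the standard Gallagher argument the paper alludes to (``a technique first proposed by Gallagher~\cite{Gallagher}''), and the outline---inclusion--exclusion, prime $r$-tuple conjecture for each term, Gallagher's singular-series average $\sum_{|S|=k}\mathfrak{S}(S)\sim h^k/k!$, truncation with a sieve upper bound for the tail---is correct and is exactly how Brandt--Damg\aa{}rd (following Gallagher) prove it.
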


Now, the probability that the PRIMEINC algorithm outputs a fixed $p$ can
clearly be written as $d(p)/x$, where $d(p)$ is the distance between $p$
and the prime that immediately precedes it. Let $\Delta_1'$ be the
statistical distance between the output distribution of PRIMEINC and the
uniform distribution. We have:
\[ \Delta_1' = \sum_{p\leq x} \left|\frac{d(p)}{x} - \frac{1}{\pi(x)}\right|
     >\!\!\! \sum_{\substack{p\leq x \\ d(p)>2\log x}}
             \left|\frac{2\log x}{x} - \frac{\log x}{x}\right|
\]
for $x\geq 17$ in view of the already mentioned classical bound
$\pi(x)>x/\log x$ for $x\geq 17$. By Lemma~\ref{lem:gallagher}, this
gives:
\[ \Delta_1' > \frac{\log x}{x} F_{2\log x}(x) =
\big(1-e^{-2}\big)(1+o(1)) > 0.86 + o(1)
\]
as $x\to+\infty$, and in particular, $\Delta_1'$ admits a relatively large
constant lower bound (at least if one believes the prime $r$-tuple
conjecture on which the entropy bound is based).

Since the entropy lower bounds for other prime generation algorithms
like~\cite{DBLP:conf/ches/JoyeP06} are based on the same techniques, it
is very likely that their output distributions can similarly be shown to
be quite far from uniform.

\subsection{Maurer's algorithm}
\label{sec:maurer}

In~\cite{DBLP:conf/eurocrypt/Maurer89,DBLP:journals/joc/Maurer95}, Maurer
proposed a prime generation algorithm based on a similar principle as
Bach's technique to produce random numbers with known
factorization~\cite{DBLP:journals/siamcomp/Bach88}. This algorithm has
the major advantage of producing a primality certificate as part of its
output; as a result, it generates only provable primes, contrary to our
method.

Additionally, in the standard version of the algorithm, the distribution
of generated primes is heuristically close to uniform. More precisely,
this distribution would be exactly uniform if the following assertions
held:
\begin{enumerate}
\item \label{item:twoxplusone}
  The distribution of the relative sizes of the prime factors of
  an integer $x$ of given length \emph{conditional to} $2x+1$ being prime
  is the same as the distribution of the relative sizes of the prime
  factors of a uniformly random integer of the same length.

\item \label{item:dickson}
  That distribution is, moreover, the same as the asymptotic one (i.e. as
  the length goes to $+\infty$), as computed using Dickson's $\rho$
  function.
\end{enumerate}
Assertion~\ref{item:twoxplusone} is a rather nonstandard statement about
prime numbers, but Maurer points to some heuristic arguments that renders
it quite plausible \cite{Maurer92e}, at least asymptotically.
Assertion~\ref{item:dickson} is of course not exactly true, and it seems
difficult to quantify how much this affects the output distribution, but
probably not to a considerable extent.

That standard version of the algorithm, however, has a small probability
of not terminating, as discussed in
\cite[\S3.3]{DBLP:journals/joc/Maurer95}. To circumvent that problem,
Maurer suggests a modification to the algorithm which introduces biases
in the output distribution. If it is used, the algorithm will in
particular never generate primes $p$ such that $(p-1)/2$ has a large
prime factor. This provides an efficient distinguisher from the uniform
distribution (exactly analogous to the ``twin prime'' distinguisher of
the previous section), since e.g. safe primes have non negligible
density. There are other ways to avoid cases of non-termination that do
not affect the output distribution to such an extent (such as some form
of backtracking), but Maurer advises against them because of the large
performance penalty they may incur.

Furthermore, the paper \cite{DBLP:journals/joc/Maurer95} also describes
faster variants of the algorithm that are less concerned with the quality
of the output distribution, and they do indeed have outputs that are very
far from uniform, typically reaching only around 10\% of all prime
numbers.

More importantly, the main drawback of Maurer's algorithm compared to our
method is likely the amount of randomness it consumes: it is within a
small constant factor of the amount consumed by the trivial algorithm. In
fact, if we neglect everything but the last loop of the topmost step in
the recursion, and count as zero the cases when the recursion has more
than two steps, we see that the average amount of randomness used by the
algorithm is bounded below by $c$ times the corresponding amount for the
trivial algorithm, where (with the notations of
\cite[Appendix~1]{DBLP:journals/joc/Maurer95}):
\[ c \geq \int_{1/2}^1 \big(1-x\big)\, dF_1(x)
        = \int_{1/2}^1 \big(1-x\big)\, \frac{dx}{x}
        = \log 2 - \frac12
     \geq 0.19
\]
and we only counted a very small fraction of the actual randomness used!

As a result, in context where random bits are scarce and primality
certificates are not required, it seems that our method offers a rather
better trade-off than Maurer's algorithm: the uniformity of the
distribution is easier to estimate in our case, and we use much less
randomness. Moreover, Maurer finds
\cite[\S4.3]{DBLP:journals/joc/Maurer95} that an efficient implementation
of his algorithm is about 40\% slower than the trivial algorithm, whereas
Algorithm~\ref{alg:basic} is easily 5 times as fast in practice.

On the other hand, if proofs of primality are considered important, it is
clearly better to rely on Maurer's algorithm than to use our method and
then primality proving.

\bibliographystyle{abbrv}
\bibliography{biblio}

\end{document}